\newcommand{\keywords}[1]{\par\addvspace\baselineskip
\noindent\keywordname\enspace\ignorespaces#1}
\newcommand{\con}{\wedge} 
\newcommand{\dis}{\vee} 
\newcommand{\alw}{\Box} 
\newcommand{\imp}{\Rightarrow} 
\newcommand{\som}{\Diamond} 
\newcommand\nonofoot[1]{%
   \begingroup
   \renewcommand\thefootnote{}\footnote{\kern-0.2ex#1}%
   \addtocounter{footnote}{-1}%
   \endgroup
}
\begin{document}

\mainmatter  

\title{Behavior recognition and analysis in smart environments for context-aware applications}

\titlerunning{Behavior recognition and analysis in smart environments...}

%
%
 \author{Rados{\l}aw Klimek}
 \institute{AGH University of Science and Technology\\
 Al. Mickiewicza 30, 30-059 Krak\'ow, Poland\\
 \email{rklimek@agh.edu.pl}
}
\authorrunning{R. Klimek}


%
%

\maketitle

\begin{abstract}
Providing accurate/suitable information on behaviors in sma\-rt environments is
a challenging and crucial task in pervasive computing
where context-awareness and pro-activity are of fundamental importance.
Behavioral identifications enable to abstract higher-level concepts that are interesting to applications.
This work proposes the unified logical-based framework to recognize and analyze behavioral specifications understood as
a formal logic language that avoids ambiguity typical for natural languages.
Automatically discovering behaviors from sensory data streams as formal specifications
is of fundamental importance to build seamless human-computer interactions.
Thus, the knowledge about environment behaviors expressed in terms of temporal logic formulas constitutes
a base for the reactive and precise reasoning processes to support trustworthy,
unambiguous and pro-active decisions for applications that are smart and context-aware.

\keywords{Unified logical framework;
sensorized environment;
context-awareness;
temporal logic;
semantic tableaux.}
\end{abstract}

\noindent
\begin{tabularx}{1\linewidth}{|X|}
\hline This is a pre-print author's version of the paper:
R.~Klimek: Behavior recognition and analysis in smart environments
for context-aware applications. \emph{Proceedings of the IEEE
International Conference on Systems, Man, and Cybernetics (SMC
2015), October 9--12, 2015, Hong Kong}, pp.\ 1949--1955. IEEE
Computer Society 2015. Available at:
\texttt{DOI:10.1109/SMC.2015.340} or
\texttt{http://ieeexplore.ieee.org/xpl/articleDetails.jsp?arnumber=7379472}\\
\hline
\end{tabularx}

\section{Introduction}
\label{sec:introduction}


Nowadays smart spaces are filled with different sensors and sensor-like equipments.
A \emph{sensor} is a device that detects events or changes from a physical environment,
that is a devise which is sensitive to a physical stimulus.
These sensors might constitute the IoT spaces (\emph{Internet of Things})
in which objects with unique identifiers create their own scenarios and interactions.
On the other hand,
the decisive feature of smart spaces is \emph{context-awareness}
which stands for the capabilities to examine changes in
the environment and to react to these changes adequately.
Important aspects of context might be:
where you are, who you are with,
and what resources are nearby.
In other words, context is ``...any information that can be used to characterize the situation of an entity.
An entity is a person, place, or object that is considered relevant to the interaction between
a user and an application, including the user and applications themselves''~\cite{Dey-Abowd-2000}.
In software engineering context-awareness means sensing and reacting on the environment.
Sensing and context understanding are necessary and of critical importance to pro-active decisions which
should be interpreted into domain-relevant concepts and situations.

Formal logic allows assertions about actions and behaviors using accurate and precise notations,
eliminating ambiguity common to other languages.
``Logic has simple, unambiguous syntax and semantics.
It is thus ideally suited to the task of specifying information systems''~\cite{Chomicki-Saake-1998}
showing the form of an argument to be valid or invalid.
Knowledge about arguments enable achieving clear thinking and relevant arguments.

The contribution of this paper is a novel and unified logical-based framework to deploy
automatic methods for the behavior recognition and its reliable knowledge representation
through the formalism of temporal logic.
It allows
to support reactive analysis of logical satisfiability,
in order to obtain trustworthy decisions for the dynamically changing smart environment.
Decisions of a system are transparent for users/inhabitants and
satisfy the assumption of context-awareness and pro-activity.
It is demonstrated that this logical framework is expressive enough.
It is also demonstrated that on-line logical reasoning is suitable for sensor data streams.
The semantic tableaux method for temporal logic as a reasoning procedure is considered.
The architecture of a software system (see Figure~\ref{fig:system-architecture}) is proposed,
as well as algorithms (see Algorithm~\ref{alg:building-specification} and Algorithm~\ref{alg:truth-trees}) to generate and interpret logical specifications.
The simple yet illustrative examples are provided,
see Formulas~(\ref{for:specification-sensor-data}) and~(\ref{for:specification-example}) for Algorithm~\ref{alg:building-specification} and
the discussed example for Algorithm~\ref{alg:truth-trees} at the end of Section~\ref{sec:specifications},
as well as related motivating examples in Section~\ref{sec:examples}.
To the best of our knowledge,
this paper presents the first formal study of
both the reactive behavior recognition and deductive-oriented analysis
for context-aware applications over sensor networks.
On the other hand,
this research opens some new directions, especially related to implementation and experiments.

There are many works considering behavior analysis in pervasive computing.
A survey for human activity is provided in the fundamental work~\cite{Aggarwal-Ryoo-2011}.
Features, representations, classification models, and datasets are surveyed.
Work is comprehensive and discusses many important aspects of the domain.
This paper refers to single-layered approaches as considered in~\cite{Aggarwal-Ryoo-2011}.
A survey of activity recognition for wearable sensors is provided in work~\cite{Lara-Labrador-2013}.
A taxonomy according to aspects of response time and learning scheme is introduced.
A couple of systems are qualitatively compared due the mentioned aspects, as well as some other ones.
Formal logic approaches, except for the fuzzy logic, are not considered.
Behavior recognition in smart homes is a topic in work~\cite{Chua-etal-2009},
whose approach influenced in some way this paper,
however,
models base on Hidden Markov Models, which constitutes a different approach in comparison to this one.
In work~\cite{Chen-etal-2013} a hierarchical framework for human activity recognition is presented,
however, the framework focuses on video based activity recognition.
The method of rather manual transformation into logical rules,
is done in an off-line manner,
and reasoning based on the resolution is proposed.
The aim is to discover a semantic gap between
the low level (data) and the high level (human understanding).
In work~\cite{Chen-etal-2014} a similar approach is presented but
formalization is based on a adaptation of temporal relations from
the Allen's temporal interval logic,
and the reasoning process is not considered.
Apart from the issue of a hierarchical approach,
these works influence this paper in such a way that
the formalization of the observed (human) activities is made on the basis of formal logic.
This paper follows work~\cite{Klimek-Kotulski-2014-IE-AITAmI} which concerns
on-the-fly modeling logical specifications and observing behaviors of users/inhabitants,
in other words, logical specifications are understood as knowledge about user preferences.
Work~\cite{Dwyer-etal-1999} proposes patterns for a property specification and
is considered in a more detailed way in the following Sections of this paper,
especially when discussing the so called learning-based approach.
Work~\cite{Magherini-etal-2013} discusses possibilities of using temporal logic and
model checking for the recognition of human activities.
This paper is relatively close to the work,
however, a deduction based approach is proposed.
The novel aspects are  unified logical framework,
basing on a purely logical approach,
and deductive-based reasoning processes to obtain pro-active decisions.

\section{Preliminaries}
\label{sec:preliminaries}

A context model that consists of three layers is shown in Figure~\ref{fig:context-aware-systems},
c.f.\ also~\cite{Klimek-Kotulski-2014-IE-AITAmI}.
\begin{figure}[htb]
\centering
\scalebox{1.0}{
\begin{pspicture}(2.0,4.0) 
\rput(1.4,0.5){\rnode{s1}{}}
\rput(1.4,2.0){\rnode{s2}{}}
\ncline[linewidth=1pt]{->}{s1}{s2}\naput[labelsep=-13pt,nrot=:D]{\textsf{tracking}}
\rput(1.4,2.2){\rnode{s3}{}}
\rput(1.4,4.0){\rnode{s4}{}}
\ncline[linewidth=1pt]{->}{s3}{s4}\naput[labelsep=-13pt,nrot=:D]{\textsf{sensing}}
\rput(.6,4.0){\rnode{s5}{}}
\rput(.6,2.2){\rnode{s6}{}}
\ncline[linewidth=1pt]{->}{s5}{s6}\naput[labelsep=-13pt,nrot=:D]{\textsf{reacting}}
\rput(.6,2.0){\rnode{s5}{}}
\rput(.6,0.5){\rnode{s6}{}}
\ncline[linewidth=1pt]{->}{s5}{s6}\naput[labelsep=-13pt,nrot=:D]{\textsf{influencing}}
\end{pspicture}
}
\includegraphics[width=.6\textwidth]{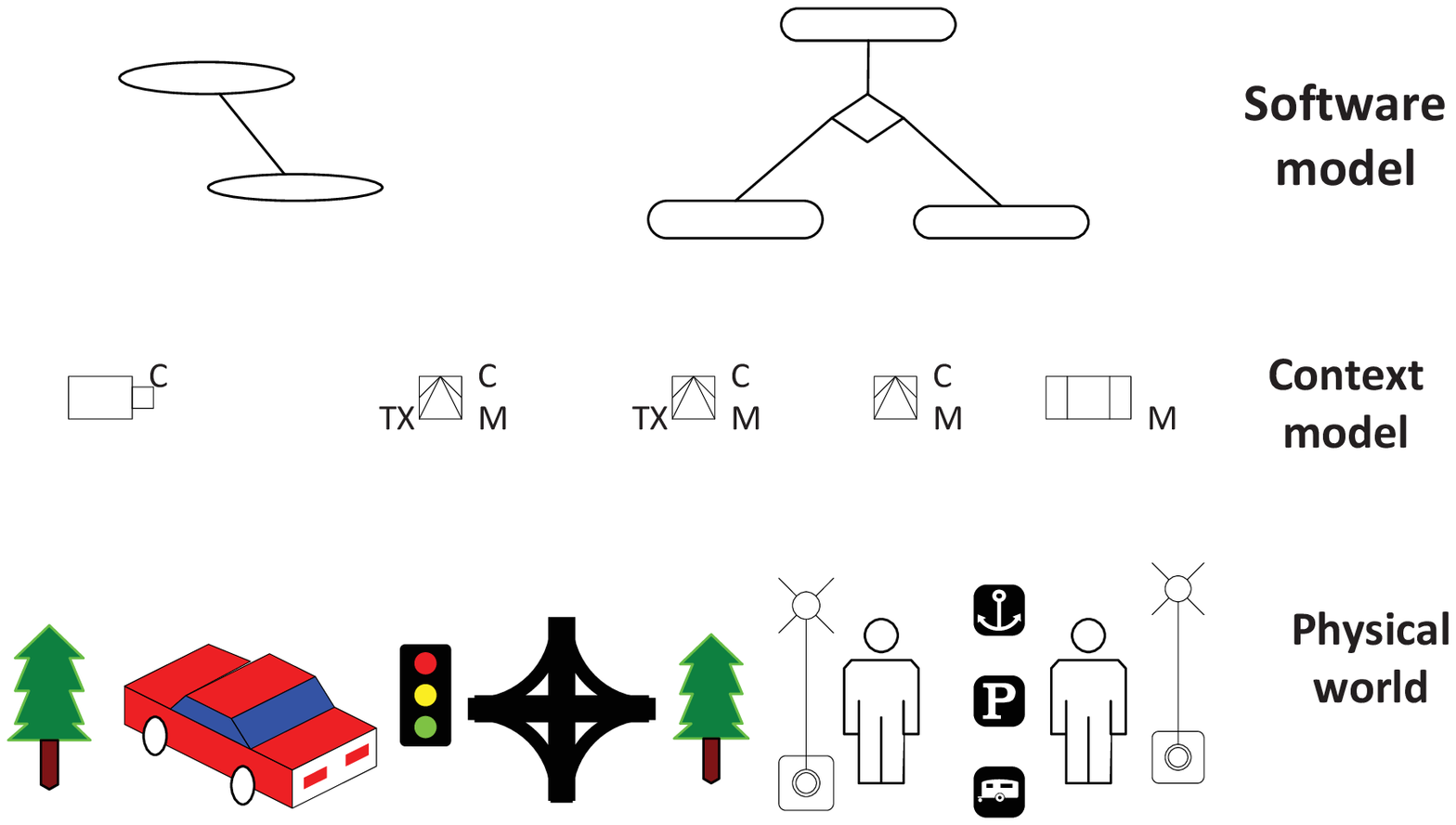}
\caption{A three-layer context model for a smart environment}
\label{fig:context-aware-systems}
\end{figure}
It contains different sensor devices which are distributed in the whole physical area.
It also refers to the concept of \emph{Ambient Intelligence} (AmI),
i.e.\ electronic devices that are sensitive and responsive to the presence of
humans/inhabitants.
Smart applications must both understand context, that is be context-awareness,
and provide pro-activity,
that is act in advance to deal with future situations,
especially negative or difficult ones.
A context-aware system is able to adapt its operations to
the current context without explicit user intervention.

\begin{figure}[htb]
\centering
\scalebox{1}{
{\small
\psmatrix[colsep=.1cm,rowsep=.1cm]
\ovalnode[linecolor=blue,fillstyle=solid,fillcolor=blue]{se}{sensing} &  & \ovalnode[linecolor=blue,fillstyle=solid,fillcolor=blue]{tr}{tracking}\\
 & {\footnotesize \rnode{app}{\psframebox[linecolor=gray,framearc=.7]{\begin{tabular}{c}context-awareness\\ \hline pro-activity\end{tabular}}}} & \\
\ovalnode[linecolor=blue,fillstyle=solid,fillcolor=blue]{re}{reacting} &  & \ovalnode[linecolor=blue,fillstyle=solid,fillcolor=blue]{in}{influencing}
\psset{arcangle=-25,arrows=->,linecolor=gray,linewidth=1.2pt}
\ncarc{tr}{se}\ncarc{se}{re}\ncarc{re}{in}\ncarc{in}{tr}
\psset{arrows=<->,linecolor=gray}
\ncline{app}{se}\ncline{app}{re}\ncline{app}{in}\ncline{app}{tr}
\endpsmatrix
}
}
\caption{Context-awareness and pro-activity of apps}
\label{fig:app}
\end{figure}
The dynamic nature of context models' analysis is shown in Figure~\ref{fig:app},
i.e.\ supplementing Figure~\ref{fig:context-aware-systems},
where different phases are repeated periodically to sense behaviors and
to generate proper system's reactions
enabling context-awareness and pro-activity of applications which operate
in a smart environment.

\emph{Temporal Logic},
and \emph{Propositional Linear Time Temporal Logic} PLTL considered here,
is a branch of formal logic with statements whose valuations depend on
time flows~\cite{Wolter-Wooldridge-2011}.
The reasoning method of \emph{semantic tableaux}
is well known in classical logic but it can be applied in temporal logic~\cite{Hahnle-2001,Klimek-2014-AMCS}.
The method provides \emph{truth trees}.
The \emph{branch} of a tree is a set of nodes/formulas connecting a node with a descendant.
Semantic tableaux is also a \emph{decision procedure} providing,
through \emph{open} branches
(that is, not containing complementary pair/pairs of atomic formulas, e.g.\ $f$ and $\neg f$) and
\emph{closed} branches
(that is, containing complementary pair/pairs of atomic formulas),
the binary answer Yes-No as a result of an inquiry.
\begin{corollary}
\label{th:decision-procedures}
If $F$ is an examined formula and $\Delta$ is a truth tree build for a formula,
then the semantic tableaux method gives answers to the following questions related to the satisfiability problem:
\begin{itemize}
\item formula $F$ is not satisfied iff the finished $\Delta(F)$ is closed;
\item formula $F$ is satisfiable iff the finished $\Delta(F)$ is open;
\item formula $F$ is always  valid iff finished $\Delta(\neg F)$ is closed.
\end{itemize}
\end{corollary}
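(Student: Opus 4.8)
The plan is to obtain all three equivalences as consequences of the soundness and completeness of the semantic tableaux calculus for PLTL, treating the construction of a \emph{finished} truth tree $\Delta(\cdot)$ --- fully expanded under the decomposition rules for $\con$, $\dis$, $\nex$, $\alw$, $\som$, loop-checked, and with all eventualities scheduled for fulfilment --- together with its closure condition as given by the cited treatments~\cite{Hahnle-2001,Klimek-2014-AMCS}.

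First I would isolate the single lemma that carries the whole argument: for an arbitrary formula $G$, the finished tree $\Delta(G)$ is closed if and only if $G$ is unsatisfiable. For the direction ``$G$ satisfiable $\imp$ $\Delta(G)$ open'' (soundness), one verifies that every tableau rule is truth-preserving downwards, so any model of $G$ satisfies at least one child at each branching point; following such choices produces a branch that can never accumulate a complementary pair $f,\neg f$, i.e.\ an open branch. For the converse ``$\Delta(G)$ open $\imp$ $G$ satisfiable'' (model extraction / completeness), one reads an open branch of the finished tree as a model: the literals collected at each temporal layer form a consistent propositional valuation, the $\con$-, $\dis$-, $\nex$- and $\alw$-rules enforce the corresponding step-by-step truth conditions, and the loop-detection together with the eventuality bookkeeping guarantees that every $\som$-subformula is actually satisfied along the resulting ultimately periodic sequence. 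Contraposing both implications yields the biconditional.

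Granting that lemma, the three items are immediate. The first is precisely the lemma instantiated at $G:=F$. The second follows because a finished tree is by construction exactly one of closed or open, so negating both sides of the first item gives that $F$ is satisfiable iff $\Delta(F)$ is open. The third uses the classical duality that $F$ is always valid iff $\neg F$ is unsatisfiable, and then applies the first item at $G:=\neg F$, giving that $F$ is always valid iff $\Delta(\neg F)$ is closed.

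The only genuinely delicate point --- and hence the expected main obstacle --- is the temporal part of the model-extraction step: in contrast to the classical propositional case one must argue both that the finished tree is finite and that an open branch genuinely determines a PLTL model, which rests on the loop-check and on the fair treatment of $\som$-eventualities. Once ``finished $\Delta$'' is understood to incorporate exactly those conditions, as it is in~\cite{Hahnle-2001,Klimek-2014-AMCS}, the remainder is the routine propositional skeleton sketched above, which is why the statement is recorded here as a corollary rather than proved from scratch.
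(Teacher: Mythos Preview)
Your proposal is correct and in spirit matches the paper, which offers only the single sentence ``The proof follows directly from the semantic tableaux method'' as its entire justification. You have simply unpacked that sentence into the standard soundness/completeness lemma for tableaux and the three routine instantiations, so there is no divergence in approach---only a difference in the level of detail supplied.
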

The proof follows directly from the semantic tableaux method.

\section{Motivating examples}
\label{sec:examples}

Let us consider some examples to illustrate the approach and provide some motivation.
A basic distinction two approaches regarding method of building logical specification is introduced:
\begin{enumerate}
  \item \emph{model-based} -- the case occurs when logical specifications (models) for context-aware systems are prepared in advance;
                                    in other words, the initial specification is not empty,
                                    but new events may affect a particular specification leading to its modification,
                                    it can be used in a decision/reasoning process without any change,
                                    but logical specification can also be dynamically expanded/rebuilt when the system operates,
                                    see the evacuation example below;
  \item \emph{learning-based} -- the case occurs when logical specifications are build on-line, that is in real-time,
                                    during normal operation of a a context-aware system;
                                    in other words, the initial specification is initially empty,
                                    and when new events occur, logical specification is built/rebuilt,
                                    and at any time it can be used in decision-making processes
                                    see work~\cite{Klimek-Kotulski-2014-IE-AITAmI} or the second example below.
\end{enumerate}

The first example discusses an evacuation situation,
i.e.\ people are located inside risk areas (e.g.\ buildings or sport stadiums)
and a dangerous situation occurs.
Context-aware and smart systems should help inhabitants/people by providing trustworthy information about evacuation paths.
The evacuation plan,
expressed as a logical specification $\Sigma$,
and understood as a set of temporal logic formulas,
must be prepared in advance.
(This is a reverse situation comparing other hypothetical cases where logical specifications might be built on-line
i.e.\ when the system operates.)
Formulas describe possible and recommended actions/transitions during the evacuation process.
After the evacuation process has been started and is being carried out,
dynamically changing situations, e.g.\ fire on a passage,
may require extension of $\Sigma$ introducing new formulas describing new situations.
It is done by software agents observing changes in particular areas.
(Graph-based description might contains nodes with different attributes,
such as entrances to corridors or staircases and edges that connect different areas.)

\begin{figure}[htb]
\begin{center}
{\small
\begin{tabular}{c}
\framebox{
\pstree[levelsep=4.0ex,nodesep=2pt,treesep=25pt]
       {\TR{$v10 \con \ldots \con (v10 \imp \som p110)$}}
       {\pstree{\TR{$v10$}}
         {{\TR{$\neg v10$}}{\pstree{\TR{$1.[a]: p110$}}{\textcolor{blue}{$\circ$}}}}}} (a)\vspace{.5mm}\\
\framebox{
\pstree[levelsep=4.0ex,nodesep=2pt,treesep=25pt]
       {\TR{$v11 \con \dots \con ((v11 \imp\som p115) \dis (v11 \imp\som p116))$}}
       {\pstree{\TR{$v11$}}
         {{\pstree{\TR{$(v11 \imp\som p115)$}}{{\TR{$\neg v11$}}{\pstree{\TR{$1.[a]:p115$}}{\textcolor{blue}{$\circ$}}}}}
         {\pstree{\TR{$(v11 \imp\som p116)$}}{{\TR{$\neg v11$}}{\pstree{\TR{$1.[b]:p116$}}{\textcolor{blue}{$\circ$}}}}}}}}
         (b)\vspace{.5mm}\\
\framebox{
\pstree[levelsep=4.0ex,nodesep=2pt,treesep=25pt]
       {\TR{$\alw(\neg p115) \con \ldots \con v11 \con ((v11 \imp\som p115) \dis (v11 \imp\som p116))$}}
       {\pstree{\TR{$1.[x]:\neg p115$}}
       {{\pstree{\TR{$v11$}}
         {{\pstree{\TR{$(v11 \imp\som p115)$}}{{\TR{$\neg v11$}}{\pstree{\TR{$1.[a]:p115$}}{\textcolor{blue}{$\times$}}}}}
         {\pstree{\TR{$(v11 \imp\som p116)$}}{{\TR{$\neg v11$}}{\pstree{\TR{$1.[b]:p116$}}{\textcolor{blue}{$\circ$}}}}}}}}}
         }
         (c)
\end{tabular}
}
\end{center}
\caption{The product of reasoning -- sample truth trees}
\label{fig:truth-trees-reasoning}
\end{figure}
Let $V=\{ \ldots, v10, v11, \ldots, p110, p115, p116, \ldots\}$
are places and passages of a building
for which the evacuation plan is to be prepared.
$\Sigma=\{ \ldots, v10 \imp \som p110, \ldots\}$
is a (small) fragment of the evacuation plan expressed in terms of temporal logic formulas.
When new objects appear in place $v10$ ($v10$ is satisfied), then the reasoning process starts,
see Figure~\ref{fig:truth-trees-reasoning}.a.
The open branch (\textcolor{blue}{$\circ$}) of the tree provides
\emph{literals}, that is atomic formulas or their negations,
$v10$ and $p110$ that satisfy
the initial formula that consists of satisfied $v10$ and conjunction of all formulas that belongs to $\Sigma$.
It allows to identify formula $v10 \imp \som p110$ that describes
the next supporting people action for a particular place, as a part of an evacuation process.
Another situation is shown in Figure~\ref{fig:truth-trees-reasoning}.b.
Let
$\Sigma=\{ \ldots, ((v11 \imp\som p115) \dis (v11 \imp\som p116)), \ldots\}$
is another fragment of an evacuation plan showing the choice of escape routes.
New objects which appear in place $v11$ involve the reasoning process that
provides through two open branches, two subsets of literals $v11$ and $p115$,
and also $v11$ and $p116$.
It means that two different actions are possible,
i.e.\
$v11 \imp \som p115$
or
$v11 \imp \som p116$.
In the last case,
see Figure~\ref{fig:truth-trees-reasoning}.c,
the extension of logical specification $\Sigma$ is discussed.
Supposing that the dynamically changing situation, e.g.\ fire, forces the closure of passage $p115$.
It leads to the need of extending the logical specification by a new formula $\alw(\neg p115)$,
i.e.\ $\Sigma := \Sigma \cup \{\alw(\neg p115)\}$.
Thus, every reasoning process for $p115$ leads to the closed branch (\textcolor{blue}{$\times$}),
i.e.\ the contradiction.
It means that the ``fired'' passage will never be proposed as an action for the evacuation procedure.
This example is also discussed in a more formal way after Algorithm~\ref{alg:truth-trees} in Section~\ref{sec:specifications}.

The above considerations should be supplemented with the following information.
The accepted decomposition procedure in Figure~\ref{fig:truth-trees-reasoning},
as well as labeling, refers to the first-order predicate calculus provided in~\cite{Hahnle-2001}.
In some cases, the outer operator $\alw$ is omitted to simplify considerations/formulas,
in other words, for example, one should write down $\alw(v11 \imp \som p116)$,
however, the well-known rules of generalization/particularization justify the simplified notation.
Reasoning engines have become more available in recent years,
c.f.~\cite{Schmidt-2013-provers},
however,
selection of an appropriate existing prover is not in the scope of this paper.

Another example might refer to the situation when logical specification $\Sigma$,
interpreted as knowledge about user/inhabitant behaviors, is built on-line,
i.e.\ initially $\Sigma=\emptyset$,
and then, observing present users' behaviors,
new temporal logic formulas for particular objects/users are added to set~$\Sigma$.
Work~\cite{Dwyer-etal-1999} discusses methods of obtaining logical specifications from a natural language.
The method is based on pattern recognition.
The consideration in this paper provides a method/idea to obtain logical specifications from
a (technical) language of physical sensors/signals which is less complex when comparing it to a natural language.
Some sample patterns for a ``sensor language'' are provided in Figure~\ref{fig:patterns-ltl}.
\begin{figure}[htb]
\centering
\begin{pspicture}(7,1.5) 
\rput(2,0){\rnode{s1}{}}
\rput(6,0){\rnode{s2}{}}
\psframe[fillstyle=solid,fillcolor=gray,linecolor=gray](2.5,.1)(2.8,-.1)
\psframe[fillstyle=solid,fillcolor=gray,linecolor=gray](4,.1)(4.7,-.1)
\rput(2.65,.25){q}
\rput(4.3,.25){r}
\rput(0.7,0){Response}
\rput(6.9,0){$\alw(q\imp \som r)$}
\ncline[linewidth=1pt]{|-}{s1}{s2}\naput{\textsf{}}
\rput(2,.5){\rnode{s3}{}}
\rput(6,.5){\rnode{s4}{}}
\psframe[fillstyle=solid,fillcolor=gray,linecolor=gray](4.5,.6)(5,.4)
\rput(4.8,.75){r}
\rput(6.5,.5){$\som r$}
\rput(0.7,.5){Existence}
\ncline[linewidth=1pt]{|-}{s3}{s4}\naput{\textsf{}}
\rput(2,1){\rnode{s5}{}}
\rput(6,1){\rnode{s6}{}}
\psframe[fillstyle=solid,fillcolor=gray,linecolor=gray](2,.9)(6,1.1)
\rput(4,1.25){p}
\rput(6.5,1.0){$\alw p$}
\rput(0.75,1){Invariance}
\ncline[linewidth=1pt]{|-}{s5}{s6}\naput{\textsf{}}
\rput(2,1.5){\rnode{s7}{}}
\rput(6,1.5){\rnode{s8}{}}
\rput(6.5,1.5){$\alw \neg p$}
\rput(0.6,1.5){Absence}
\ncline[linewidth=1pt]{|-}{s7}{s8}\naput{\textsf{}}
\end{pspicture}
\caption{Sample PLTL patterns for events $p$, $q$, $r$, etc.}
\label{fig:patterns-ltl}
\end{figure}
Registered (atomic) events for every user might comprise a label for
a physical node (e.g.\ the presence in a node)
and time for the event occurrence (i.e.\ the time stamp),
e.g.\ $\langle p210,t2014.08.14.21.56.00 \rangle$.
These elementary events are translated into logical specifications
when analyzing time of the events and employing (predefined) patterns.
If logical specification $\Sigma$ is built,
then the pro-active decision might be taken when new event, 
say $gt$, occurs and is considered as a kind of trigger.
Triggering is an important aspect for this case.
$\Sigma$, in fact (past) behaviors, is now interpreted as user's preferences to
support a new action of a user.
The entire input formula for the reasoning process might comprise
conjunction of satisfied $gt$ and conjunctions $C(\cdot)$ of the $\Sigma$ formulas,
i.e.\ cumulatively $gt \con C(\Sigma)$.
The reasoning process, and its sub-instances,
might be performed in a similar way as in the previous case
shown in Figure~\ref{fig:truth-trees-reasoning}.

\section{System architecture}
\label{sec:architecture}

The architecture of a proposed system embodied in its well-identified components is
briefly discussed in this Section.
It allows to understand how the system works,
and what are the basic functionalities and services of particular components.

An overall architecture of systems for both model- and learning-based approaches is shown
in Figure~\ref{fig:system-architecture}.
\begin{figure}[htb]
\centering
{\small
\begin{pspicture}(8,7) 
\psset{framearc=0}
\psset{shadow=true,shadowcolor=gray}
\psset{linecolor=black}
\rput(2.3,6.6){\rnode{n:sig1}{signals}}
\rput(2.3,5.6){\rnode{n:sg}{\psframebox
                      {\begin{tabular}{c}
                            \textcolor{black}{\textsc{Signal}}\\
                            \textcolor{black}{\textsc{Manager}}
                      \end{tabular}}}}
\rput(2.3,4.0){\rnode{n:si}{\psframebox
                      {\begin{tabular}{c}
                            \textcolor{black}{\textsc{Signal}}\\
                            \textcolor{black}{\textsc{Interpreter}}
                      \end{tabular}}}}
\rput(0,2.5){\rnode{n:spec}{{$\Sigma$\,\,}}}
\rput(2.3,2.5){\rnode{n:sm}{\psframebox
                      {\begin{tabular}{c}
                            \textcolor{black}{\textsc{Specification}}\\
                            \textcolor{black}{\textsc{Manager}}
                      \end{tabular}}}}
\rput(2.3,1){\rnode{n:su}{\psframebox
                      {\begin{tabular}{c}
                            \textcolor{black}{\textsc{Specification}}\\
                            \textcolor{black}{\textsc{Updater}}
                      \end{tabular}}}}
\rput(6.3,4.0){\rnode{n:re}{\psframebox
                      {\begin{tabular}{c}
                            \textcolor{black}{\textsc{Reasoning}}\\
                            \textcolor{black}{\textsc{Engine}}
                      \end{tabular}}}}
\rput(6.3,2.5){\rnode{n:ri}{\psframebox
                      {\begin{tabular}{c}
                            \textcolor{black}{\textsc{Result}}\\
                            \textcolor{black}{\textsc{Interpreter}}
                      \end{tabular}}}}
\rput(6.3,1){\rnode{n:ap}{\psframebox
                      {\begin{tabular}{c}
                            \textcolor{black}{\textsc{Action}}\\
                            \textcolor{black}{\textsc{Provider}}
                      \end{tabular}}}}
\rput(6.3,0){\rnode{n:sig2}{signals}}

\psset{linewidth=1.1pt}
\psset{shadow=false}
\psset{linecolor=black}
\ncline{->}{n:sig1}{n:sg}
\ncline{->}{n:sg}{n:si}
\ncline[linestyle=dashed]{->}{n:spec}{n:sm}
\ncline[linestyle=dotted]{->}{n:si}{n:sm}
\ncline{->}{n:si}{n:re}\naput{$f$}
\ncline{->}{n:sm}{n:re}\nbput{$\Sigma$}
\ncline{->}{n:re}{n:ri}
\ncline{->}{n:ri}{n:su}
\ncline{->}{n:ri}{n:ap}
\ncline{->}{n:su}{n:sm}
\ncline{->}{n:ap}{n:sig2}
\end{pspicture}
}
\caption{An overall architecture of systems
        (flows: solid lines -- both approaches,
                dashed line -- only model-based approach,
                dotted line -- only learning-based approach)}
\label{fig:system-architecture}
\end{figure}
Signals are gathered
(see tracking/sensing in Figures~\ref{fig:context-aware-systems} and~\ref{fig:app})
from an environment by Signal Manager.
Then signals are interpreted by Signal Interpreter producing temporal logic formulas
generated by an algorithm such as Algorithm~\ref{alg:building-specification},
that is translating events to logical formulas.
If once massive amounts of data are processed (the learning-based approach)
then formulas flow to Specification Manger that stores the basic logical specification $\Sigma$,
that is the current logical model of a system.
If single data is processed
(rather the model-based but also possible in the case of the learning-based approach)
then a formula/formulas are provided to the Reasoning Engine.
The second input for the Reasoning Engine component is logical specification $\Sigma$.
The component performs logical reasoning using the semantic tableaux method,
however, the resolution-based reasoning is also possible.
The output is information, for example, basing on Corollary~\ref{th:decision-procedures},
which is interpreted by Result Interpreter.
It provides two outputs.
The first one allows to update, if necessary, the current logical specification $\Sigma$
(stored in Specification Manager) by Specification Updater through
deleting or adding some new formulas.
The second one allows Action Provider to supply
(see reacting/influencing in Figures~\ref{fig:context-aware-systems} and~\ref{fig:app})
signals to the environment.
Flows in Figure~\ref{fig:system-architecture} are not labeled
(except specification $\Sigma$ and formulas/formula $f$)
since
they would require precise definitions of the flowing data.
On the other hand, their meanings seem intuitive.

Some brief and overall information on methods of Reasoning Engine basing on the semantic tableaux method,
see also Section~\ref{sec:preliminaries} and Corollary~\ref{th:decision-procedures},
is shown in Table~\ref{tab:reasoning}.
\begin{table}[htb]
\centering
\newcolumntype{e}{p{2.0cm}}
\begin{tabularx}{.9\textwidth}{|e|X|}
\hline
Formulations & Remarks\\
\hline
\hline
$C(\Sigma)$, $f \con C(\Sigma)$ & basic logical properties of a specification, open and closed branches, satisfiability, falsification, contradiction\\
\hline
\mbox{$C(\Sigma) \imp f$}, $\neg(C(\Sigma) \imp f)$
& properties that follow, from premises to conclusions, logical consequence, validity, deduction theorem, \emph{reductio ad absurdum}\\
\hline
\end{tabularx}
\caption{Methods of Reasoning Engine}
\label{tab:reasoning}
\end{table}
$C(\Sigma)$ means a conjunction of all formulas constituting logical specification $\Sigma$,
in other words,
a set of formulas $\Sigma$ are interpreted (preprocessed) inside Reasoning Engine as a conjunction of formulas $C(\Sigma)$.
$f$ is a single formula provided by Signal Interpreter.
The reasoning process may comprise many methods and aspects that follow from the input data/formulas,
see formulation in Table~\ref{tab:reasoning},
as well as the assumed reasoning method (truth trees),
for example,
examining satisfiability of the possessed specification,
which happens if a new formula is added to a specification,
whether a property can be inferred from a specification using deductive approach,
etc.

\section{Building and managing specifications}
\label{sec:specifications}

Discovering formal specifications automatically from sensory data streams is discussed below.
The process of building logical specifications should be considered from
a broader point of view which follows from the taxonomy discussed at the beginning of Section~\ref{sec:examples}.

The introduction of a method for building logical specifications,
the physical world, or smart environment, is formally described over a graph structure.
\begin{definition}
An \emph{attributed graph} $G$ is a tuple
$\langle V, E, N, \alpha, S, \beta \rangle$, where
\begin{itemize}
\item $\langle V,E\rangle$ is a directed graph with a set of vertices $V$
and a set of edges or lines $E$,
\item $N$ is a set of labels/names,
\item $\alpha : V \rightarrow N$ is a function that labels vertices,
\item $S$ is a set of labels/sensors, and
\item $\beta : V \rightarrow 2^{S}$ is a function that labels vertices.
\end{itemize}
A \emph{smart environment} $En$ is an attributed graph as defined above.
\end{definition}
$N$ are commonly used (informal) names for vertices, or nodes
(for example: a gate, a crossroad, a staircase, a classroom, etc),
if necessary.
$S$ are sensors located in a node
that detects or measures a physical property and records, indicates, or otherwise responds to it
(for example: tactile sensors,
temperature, humidity and light sensors,
chemical sensors, bio-sensors, etc).
This approach enables the gathering of multiple sensory data in a single node, if necessary.
For example,
on the basis of formal logic,
it can be illustrated by a formula
\begin{eqnarray}
  s_1 \con s_2 \con \neg s_3 \con s_4\label{for:sensors}
\end{eqnarray}
where $s_1, s_2, s_3, s_4 \in S$,
and they are responsible for reading sensory data available in a particular node $v_i \in V$,
say there are the following four data:
temperature exceeded,
humidity exceeded,
high levels of light, and
vibration, respectively.
However,
to simplify the consideration in the rest of the paper
\begin{itemize}
  \item the existence of a single sensor in every node is assumed, and
  \item it is always the object presence sensor/detector that also identifies this object.
\end{itemize}

Let us consider a set of users/inhabitants
$O=\{ o_{1},o_{2},... \}$
that operate in a smart environment.
These users are identified on-line,
i.e.\ when the system operates, and have unique identifiers.
The problem of objects'/users'/inhabitants' unambiguous identification is
a well-known question and it may be done in different ways,
for example by using RFID, PDA devices, biometric data, image scanning, pattern recognition, and others.
The issue of users'/inhabitants' identification is not discussed here.

Events basing on the object presence detection in nodes are registered and
the time-stamp for every event is also registered.
\begin{definition}
\label{def:event}
An \emph{event} $b_i$ is a triple that belongs to $\langle O,V,T \rangle$,
where
\begin{itemize}
\item $O$ is a set of identified users/inhabitants,
\item $V$ is a node of a network, and
\item $T$ is a set of time stamps.
\end{itemize}
A \emph{behavior} $B$ of a smart environment is a set of events $\{ b_{1}, b_{2}, \dots, b_i, \ldots \}$.
\end{definition}
For example,
$b_{i}=\langle idEmily,p0018,t2015.02.11.09.30.15 \rangle$
means that the presence of the $idEmily$ object is observed at
the physical point/area/node $p0018$ of the environment,
and the time stamp assigned to this event is $t2015.02.11.09.30.15$.
Let us note that all nodes that occur in events, or in a behavior,
are equivalent to vertices that occur in an attributed graph,
or a smart environment.
The following notation is introduced.
Let $b_{i}.o_{j}$ be an object $o_{j}$ that belongs to an event $b_i$,
and $b_{i}.v_{k}$ is a node $v_{k}$ that belongs to an event $b_i$, etc.

The algorithm for building logical specifications for every object registered in a smart environment
is given as Algorithm~\ref{alg:building-specification}.
\begin{algorithm}[htb]
\caption{Building logical specifications for objects $O$}
\label{alg:building-specification}
{\normalsize
\begin{algorithmic}[1]
\algrenewcommand\algorithmicrequire{\textbf{Input:}}
\algrenewcommand\algorithmicensure{\textbf{Output:}}
\Require (New) behavior $B$ (non-empty)
\Ensure Logical specifications $L_{i=1,\ldots}$
\State{Divide $B$ into subsets $B_{i=1,\ldots}$ for every object $o_{i=1,\ldots}$}\label{alg:divide}
\For{every $B_i$}

\State $L_i := \emptyset$ \Comment{initiating specification for $o_i$}

\For{$\forall v \in G$}
\If{$v \not\in \{ b_{i}.v_{j} : b_i \in B_i \con j>0 \}$}\label{alg:saf-if}
\State{$L_i := L_i \cup \{ \alw\neg(G.v) \}$}\Comment{saf}\label{alg:saf}
\EndIf{}
\EndFor

\State{Form list $h=[h_1,\ldots,h_n]$ from set $B_{i}$;}\label{alg:form-list}
\State{Sort list $h$ ascending by time stamps;}\label{alg:sort-list}

\State{$l:=1$;}
\Repeat\label{alg:repeat-s}
\State{$k:=l$;}
\While{$(h_{k}.v = h_{l}.v)\con(l<n)$}\label{alg:while-s}
\State{$l:= l+1$;}
\EndWhile\label{alg:while-e}
\If{$h_{k}.v = h_{l}.v$}\label{alg:single-sequence}
\State{$L_i := L_i \cup \{ \som(h_{k}.v) \}$}\Comment{liv1}\label{alg:liv1}
\Else
\If{$h_{k}.v \neq h_{l}.v$}
\State{$L_i := L_i \cup \{ \alw(h_{k}.v \imp \som(h_{l}.v)) \}$}\Comment{liv2}\label{alg:liv2}
\EndIf
\EndIf
\Until{$l=n$}\label{alg:repeat-e}

\EndFor
\end{algorithmic}
}
\end{algorithm}
\emph{Logical specification} $\Sigma$,
or $L_{i}$,
is a set of syntactically correct temporal logic formulas.
The algorithm bases on the analysis of all events that occur in a smart environment.
The algorithm is explained with the remarks given below.
\begin{itemize}
  \item Separate specifications for each object are built (line~\ref{alg:divide});
  \item Every system should be described using both safety and liveness properties~\cite{Alpern-Schneider-1985};
  \item It is tested which nodes are not involved in registered events (line~\ref{alg:saf-if});
  \item The most general form for \emph{safety} (informally: nothing bad will ever happen) is $\alw\neg(p)$,
        i.e.\ some nodes might be never visited (line~\ref{alg:saf}, labeled ``saf'');
        one can consider the \textbf{absence} pattern in terms of Figure~\ref{fig:patterns-ltl};
  \item Auxiliary lists (lines~\ref{alg:form-list} and~\ref{alg:sort-list})
        are created for events that occur for an object;
  \item List $h$ consists of at least one element (line~\ref{alg:form-list});
  \item The repeat loop allows to find all sequences of events following each other
       (lines from~\ref{alg:repeat-s} to~\ref{alg:repeat-e});
  \item The inner loop allows to skip to a different node/event, if any
       (lines from~\ref{alg:while-s} to~\ref{alg:while-e});
  \item The most general form for \emph{liveness} (informally: something good will happen) is $\alw(q \imp \som r)$ or $\som r$,
        i.e.\ some nodes are visited (lines~\ref{alg:liv1} or~\ref{alg:liv2}, labeled ``liv1'' or ``liv2'', respectively);
        one can consider the \textbf{existence} or \textbf{response} patterns in terms of Figure~\ref{fig:patterns-ltl}, respectively;
  \item The existence pattern can occur at most once (line~\ref{alg:single-sequence});
  \item Summing up, temporal logic formulas are produced in three places of the algorithm which are labeled by ``saf'', ``liv1'', and ``liv2''.
\end{itemize}

Let us consider the illustrative example for Algorithm~\ref{alg:building-specification}.
Nodes for a smart environment are $\{ e2, s03, s07, s08, \}$,
i.e.\ three labeled nodes/vertices.
The considered objects/users $O=\{ \ldots,o_{5},\ldots \}$.
A behavior, that is registered events, is
\begin{eqnarray}
B=\{\langle o5,s03,t2015.02.12.09.30.15\rangle,\nonumber\\
\langle o5,s08,t2015.02.12.09.32.40\rangle,\nonumber\\
\langle o5,s08,t2015.02.12.09.33.30\rangle,\nonumber\\
\langle o5,s08,t2015.02.12.09.34.20\rangle,\nonumber\\
\langle o5,s07,t2015.02.12.09.35.20\rangle,\nonumber\\
\langle o5,s07,t2015.02.12.11.37.15\rangle\}\label{for:specification-sensor-data}
\end{eqnarray}
The algorithm produces the following logical specification
\begin{eqnarray}
L_i=\{ \alw\neg(e2),
\alw(s03 \imp \som s08),
\alw(s08 \imp \som s07) \}\label{for:specification-example}
\end{eqnarray}
Every logical specification can be used for the reasoning process as shown
in Figure~\ref{fig:truth-trees-reasoning},
or in Figure~\ref{fig:truth-trees-another} as another example of a truth tree
for Formula~(\ref{for:specification-example}),
where conjunction of all sub-formulas are analyzed.
\begin{figure}[htb]
\centering
{\small
\pstree[levelsep=5.0ex,nodesep=2pt,treesep=25pt]
       {\TR{$\alw\neg(e2) \con \alw(s03 \imp \som s08) \con \alw(s08 \imp \som s07)$}}
       {\pstree{\TR{$1.[x]: \neg e2$}}
       {\pstree{\TR{$1.[y]: s03 \imp \som s08$}}
       {\pstree{\TR{$1.[z]: s08 \imp \som s07$}}{\pstree{\TR{$\neg s03$}}{\TR{$\neg s08$}\TR{$1.[b]: s07$}}
                                                 \pstree{\TR{$1.[a]: s08$}}{\TR{$\neg s08$}\TR{$1.[b]: s07$}}}}}}
}
\caption{Another example of a truth tree}
\label{fig:truth-trees-another}
\end{figure}
Many different methods, as well as deductive systems, for truth trees and semantic tableaux are discussed
in work~\cite{Howson-1997} that might help to operate and manipulate efficiently and effectively with truth trees.

The more general remarks for Algorithm~\ref{alg:building-specification}
are given below.
\begin{itemize}
  \item The algorithm produces logical specifications $L_{i}$ for every object that operates in a smart environment;
  \item It should be stressed again that,
        to simplify considerations, the one-sensor case (the object detection) is discussed,
        in other words, Formula~(\ref{for:sensors}) might be replaced by a single atomic sub-formula $s_1$ as an example,
        or, in terms of the algorithm, by $h_{k}.v$ as an example;
  \item The more general issue is the question when the algorithm should operate,
        for example, whenever it is required (on demand) or at ``the end of a day'' (whatever it means),
        this is an open question for future work;
  \item Another open issue is the question of what happens when an ``old'' specification,
        i.e.\ specification obtained as a result of the previous execution, is
        summed, if necessary, with specifications of the current execution,
        then one should examine the entire specification using decision procedures
        mentioned at the end of Section~\ref{sec:preliminaries},
        as Corollary~\ref{th:decision-procedures},
        to discover open and closed branches;
  \item The sketch for the algorithm that unifies, if necessary,
        all specifications obtained from Algorithm~\ref{alg:building-specification}
        is given as Algorithm~\ref{alg:building-environment},
        of course,
        there is no problem to prepare the reverse algorithm,
        that separates logical specifications due to each object.
\end{itemize}
\begin{algorithm}[htb]
\caption{Building logical specification for smart env.\ $En$}
\label{alg:building-environment}
{\normalsize
\begin{algorithmic}[1]
\algrenewcommand\algorithmicrequire{\textbf{Input:}}
\algrenewcommand\algorithmicensure{\textbf{Output:}}
\Require Logical specifications $L_{i=1,\ldots,n}$ (for object $o_{i=1,\ldots,n}$)
\Ensure Logical specification $\Sigma$
\For{every $L_i$}
\For{$\forall f \in L_i$}
\State{attribute formula $f$ uniquely due to object $o_i$}
\EndFor
\EndFor
\State{$\Sigma := \bigcup_{i=1}^{n} L_i$}
\end{algorithmic}
}
\end{algorithm}
\begin{corollary}
\label{cor:algorithm-1}
The following two statements are valid.
\begin{enumerate}
  \item
The time complexity for Algorithm~\ref{alg:building-specification} is expressed by
$\mathcal{O}(o \cdot n)$,
where $o$ is the number of objects that operate in a smart environment,
and $n$ is a number of events registered for each object.
  \item
If a set of all objects and a set of all events are finite,
then Algorithm~\ref{alg:building-specification} always terminates.
\end{enumerate}
\end{corollary}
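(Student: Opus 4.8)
The plan is to prove the two statements of Corollary~\ref{cor:algorithm-1} separately, but from a single line-by-line structural reading of Algorithm~\ref{alg:building-specification}: the running-time bound first, after which termination essentially falls out.

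For statement~(1), I would charge a cost to each step of the algorithm in terms of $o$ (the number of objects) and $n$ (the number of events per object), treating the fixed smart-environment graph $G$ as a constant of the problem instance. Line~\ref{alg:divide} partitions the behavior $B$, which holds $\mathcal{O}(o\cdot n)$ events, into its $o$ blocks $B_i$; with a dictionary keyed by the object identifier this is linear in the number of events. The outer \textbf{for} loop runs exactly $o$ times, so it remains to bound one iteration by $\mathcal{O}(n)$. Inside one iteration: initialising $L_i$ is $\mathcal{O}(1)$; after one $\mathcal{O}(n)$ pass that records which nodes occur in $B_i$, the loop over $\forall v\in G$ does $\mathcal{O}(1)$ work per vertex, i.e.\ $\mathcal{O}(|V|)$, which is absorbed into the constant once $|V|$ is fixed; forming the list $h$ from $B_i$ is $\mathcal{O}(n)$; and the \textbf{repeat} loop (lines~\ref{alg:repeat-s}--\ref{alg:repeat-e}) is handled by the standard two-pointer amortisation argument — the index $l$ is never decreased, is advanced only inside the nested \textbf{while}, and is bounded above by $n$, so across the whole \textbf{repeat}/\textbf{while} nest there are $\mathcal{O}(n)$ steps, each doing $\mathcal{O}(1)$ work (one node comparison and at most one insertion of a ``liv1''/``liv2'' formula). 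Multiplying the per-object $\mathcal{O}(n)$ by the $o$ iterations of the outer loop yields $\mathcal{O}(o\cdot n)$.

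The one step that does not literally fit this budget is the sort on line~\ref{alg:sort-list}: a general comparison sort is $\mathcal{O}(n\log n)$. I would reconcile this with the claimed bound by observing that the sort key is a timestamp of bounded calendar resolution, so a radix/bucket sort runs line~\ref{alg:sort-list} in $\mathcal{O}(n)$ (equivalently, one may assume the sensor stream is delivered in chronological order, so no sort is needed at all). I expect this point — together with the fact that $|V|$ is being absorbed into the constant rather than exposed as a third parameter — to be the only genuinely delicate part of the complexity half; everything else is routine accounting.

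For statement~(2), finiteness of the object set makes line~\ref{alg:divide} produce finitely many blocks $B_i$, so the outer \textbf{for} loop executes finitely often; finiteness of the event set makes every $B_i$, and hence every list $h$, finite, so the sort and (with $G$ finite, which is implicit in the notion of a smart environment) the loop over $\forall v\in G$ both terminate on finite data. The only loop whose termination is not immediate is the \textbf{repeat} loop, so I would exhibit its loop variant explicitly: $n-l$ is a non-negative integer, every execution of the body increments $l$ at least once — because the body first sets $k:=l$, whereupon the \textbf{while} guard $(h_{k}.v=h_{l}.v)\con(l<n)$ holds whenever $l<n$ — and the loop exits the moment $l=n$; hence the body runs at most $n$ times. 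The main obstacle here is simply making this progress argument airtight: checking the degenerate case $n=1$ (a single pass of the body, taking the ``liv1'' branch) and confirming that no control path through the body can leave $l$ unchanged while $l<n$. Once that is done, all loops are finite and the algorithm halts.
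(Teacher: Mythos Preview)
Your proposal is correct and follows essentially the same line-by-line structural approach as the paper's own proof, which simply notes that the outer loop is governed by $o$, the inner \textbf{repeat} loop by $n$, and that the remaining operations and loops contribute constant cost, with termination following from finiteness of objects, vertices, and events. Your treatment is in fact considerably more careful than the paper's: the paper does not address the sort on line~\ref{alg:sort-list} at all (your radix/chronological-stream fix is exactly what is needed to rescue the stated $\mathcal{O}(o\cdot n)$ bound), nor does it spell out the two-pointer amortisation or the loop variant $n-l$ that you make explicit.
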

\begin{proof}
The main, outer loop depends on a number of objects $o$.
The inner, repeat loop depends on a number of events $n$.
Other operations (assignment) and loops (limited number of iterations) give constant costs.
Thus, the time complexity of Algorithm is linearly dependent on the numbers of objects and events.

The number of objects is finite (the for loop),
the number of vertices is limited (the inner for loop),
as well as the number of registered events is limited (the inner repeat loop),
thus, the algorithm always terminates.
\end{proof}

Let us supplement this Section with Algorithm~\ref{alg:truth-trees}
that illustrates more formally considerations following Figure~\ref{fig:truth-trees-reasoning}.
\begin{algorithm}[htb]
\caption{Managing and interpreting truth trees (sketch)}
\label{alg:truth-trees}
{\normalsize
\begin{algorithmic}[1]
\algrenewcommand\algorithmicrequire{\textbf{Input:}}
\algrenewcommand\algorithmicensure{\textbf{Output:}}
\Require Logical specification $\Sigma$; new formula $f$
\Ensure Truth tree $\Delta$; logical specification $\Sigma$; $Open$;
\State{$L:=$ formulas $\Sigma$ that refer to the same object as $f$ refers;}
\State{Build truth tree $\Delta$ for a combined formula $f \con C(L)$;}
\State{$R:=$ select branches of $\Delta$ with literals from formula $f$;}
\State{$Open:=$ select open branches from $R$;}
\State{$Closed:=$ select closed branches from $R$;}
\State{.....}
\State{If necessary, remove/modify formulas from specification~$\Sigma$ basing on literals which belong to $f$ and $Closed$;}\label{alg:remove-s}
\State{$\Sigma:= \Sigma \cup \{f\}$}\Comment{the new basic specification;}\label{alg:remove-e}
\State{.....}
\State{Analyze nodes from $Open$ to provide new actions;}\label{alg:action}
\end{algorithmic}
}
\end{algorithm}
It gives an idea how both Reasoning Engine and Result Interpretation,
shown in Figure~\ref{fig:system-architecture},
work.
The $f \con C(\Sigma)$ case,
see Table~\ref{tab:reasoning},
is taken into account.
It is assumed that initially $\Sigma$ contains no contradiction.
$Closed$ is a set of closed branches of a tree and constitutes a base for further modification of
the basic logical specification $\Sigma$,
if necessary,
removing formulas that contradict with a newly introduced formula.
$Closed'$ is a set of all literals extracted from $Closed$.
$Open$ is a set of open branches of a tree and constitutes a base for selecting satisfiable graph nodes.
$Open'$ is a set of all literals extracted from $Open$.

If necessary,
specification $\Sigma$ is modified,
see lines~\ref{alg:remove-s}--\ref{alg:remove-e},
to remove contradictory formulas from a specification.
This operation is performed using literals which belong to $Closed$/$Closed'$ (contradictory literals)
and $f$ (new formulas, perhaps influencing the basic specification $\Sigma$ through introducing contradictions, if any),
see the example and the last subcase given below.
Analyzing open branches $Open$ to provide actions for a system,
see line~\ref{alg:action},
is a standard procedure,
see the example and all subcases given below.

The illustrative \textbf{example}
to supplement both Algorithm~\ref{alg:truth-trees}
and informal considerations succeeding Figure~\ref{fig:truth-trees-reasoning}
is now provided.
For the (Figure) \ref{fig:truth-trees-reasoning}.a subcase,
$\Sigma=\{ \ldots, v10 \imp \som p110, \ldots\}$ and
$f=v10$.
Then $Open'=\{v10,p110\}$ provides literals that
allow to find the appropriate formula in $\Sigma$,
that is formula $v10 \imp \som p110$.
For the \ref{fig:truth-trees-reasoning}.b subcase,
$\Sigma=\{ \ldots, ((v11 \imp\som p115) \dis (v11 \imp\som p116)), \ldots\}$ and
$f=v11$.
Then $Open'=\{\{v11,p115\},\{v11,p116\}\}$ provides literals leading
to formula $((v11 \imp\som p115) \dis (v11 \imp\som p116))$ describes
formula showing two equivalent movements (passages $p115$ or $p116$).
For the \ref{fig:truth-trees-reasoning}.c subcase,
$f=\alw(\neg p115)$.
Then $Open'=\{v11,p116\}$ and $Closed'=\{\ldots,v11,p115,\dots\}$.
On one hand, $Open'$ allows to point passage $p116$.
On the other hand,
$Closed'$,
showing literals $v11,p115$,
allows to modify a formula as a result of the passage elimination (fire),
that is to replace $((v11 \imp\som p115) \dis (v11 \imp\som p116))$ by
$(v11 \imp\som p116)$.
Then the resulting specification is
$\Sigma = \{\alw(\neg p115), \ldots, v11, (v11 \imp\som p116)\}$.

\textbf{Summing up},
\begin{itemize}
  \item encoding behaviors to logical specifications is a natural process that can be applied to context-aware systems.
  \item There are two different approaches mentioned in the beginning of Section~\ref{sec:examples}.
  \item Some other studies that refer to the implementation and application aspects are open research questions.
        For example,
        the form of a formula located in the root of truth trees,
        that is the disjunction of sub-formulas (the choice between alternatives) or
        conjunction of sub-formulas (satisfiability, contradiction).
        Another example is a method for storing formulas,
        as well as an idea to register multiplicity of formulas/events
        to introduce additional information about the event popularity.
  \item Logical specifications, encoding registered behaviors,
        can be interpreted as preferences understood as a priority in selection.
        Thus, gathering knowledge about preferences is also expressed as logical formulas.
\end{itemize}

\section{Concusion}
\label{sec:conclusion}

This paper presents a method for behavior discovery
as well as the logical satisfiability-oriented reactive analysis
for smart and sensor-based environments to support context-aware and pro-active decisions.
This approach constructs the process for building logical specifications that
fulfill the recognition process providing behavioral specification
in terms of temporal logic formulas.
The proposed unified logical framework is focused on sensor based activity recognition.

Future works should cover more detailed algorithms,
architecture of a multi-agent system and detailed use cases.
Considering graph representations and transformations~\cite{Kotulski-Sedziwy-2011,Kotulski-Sedziwy-2010} is encouraging for
efficient implementation and deploying with presented here logical-oriented approach.
More comparison study with other existing methods and more
theoretical and experimental evaluations are required for future work.

\bibliographystyle{splncs03}
\bibliography{bib-rk,bib-rk-main,bib-rk-pervasive,bib-rk-graph}

\begin{thebibliography}{10}
\providecommand{\url}[1]{\texttt{#1}}
\providecommand{\urlprefix}{URL }

\bibitem{Aggarwal-Ryoo-2011}
Aggarwal, J., Ryoo, M.: Human activity analysis: A review. ACM Computing Survey
   43(3),  16:1--16:43 (Apr 2011)

\bibitem{Alpern-Schneider-1985}
Alpern, B., Schneider, F.B.: Defining liveness. Information Processing Letters
  21 (4),  181--185 (1985)

\bibitem{Chen-etal-2013}
Chen, S., Liu, J., Wang, H., Augusto, J.C.: A hierarchical human activity
  recognition framework based on automated reasoning. In: {IEEE} International
  Conference on Systems, Man, and Cybernetics, Manchester, {SMC} 2013, United
  Kingdom, October 13--16, 2013. pp. 3495--3499 (2013)

\bibitem{Chen-etal-2014}
Chen, S., Liu, J., Wang, H., Augusto, J.C.: Formal logical transformation of
  hierarchical human activity for reasoning based recognition. In: Decision
  Making and Soft Computing, chap.~60, pp. 354--359. World Scientific
  Publishing (2014)

\bibitem{Chomicki-Saake-1998}
Chomicki, J., Saake, G. (eds.): Logics for Databases and Information Systems.
  Kluwer (1998)

\bibitem{Chua-etal-2009}
Chua, S.L., Marsland, S., Guesgen, H.W.: Behaviour recognition from sensory
  streams in smart environments. In: Nicholson, A.E., Li, X. (eds.)
  Australasian Conference on Artificial Intelligence. Lecture Notes in Computer
  Science, vol. 5866, pp. 666--675. Springer (2009)

\bibitem{Dey-Abowd-2000}
Dey, A.K., Abowd, G.D.: Towards a better understanding of context and
  context-awareness. In: Workshop on The What, Who, Where, When, and How of
  Context-Awareness (CHI 2000) (April 2000),
  \url{http://www.cc.gatech.edu/fce/contexttoolkit/}

\bibitem{Dwyer-etal-1999}
Dwyer, M.B., Avrunin, G.S., Corbett, J.C.: Patterns in property specifications
  for finite-state verification. In: Boehm, B.W., Garlan, D., Kramer, J. (eds.)
  Proceedings of the 21st International Conference on Software Engineering
  (ICSE 1999), Los Angeles, CA, USA, May 16--22, 1999. pp. 411--420 (1999)

\bibitem{Hahnle-2001}
H\"{a}hnle, R.: Tableaux and related methods. In: Robinson, J.A., Voronkov, A.
  (eds.) Handbook of Automated Reasoning, pp. 100--178. Elsevier and MIT Press
  (2001)

\bibitem{Howson-1997}
Howson, C.: Logic with Trees: An Introduction to Symbolic Logic. Routledge
  (1997)

\bibitem{Klimek-2014-AMCS}
Klimek, R.: A system for deduction-based formal verification of
  workflow-oriented software models. International Journal of Applied
  Mathematics and Computer Science  24(4),  941--956 (2014),
  \url{http://www.amcs.uz.zgora.pl/?action=paper\&paper=802}, web of
  Science=WOS:000348144200018; IF(2014)=1,227; kwartyle(2014): Q1=applied
  mathematics, Q3=computer science, artificial intelligence

\bibitem{Klimek-Kotulski-2014-IE-AITAmI}
Klimek, R., Kotulski, L.: Proposal of a multiagent-based smart environment for
  the iot. In: Augusto, J.C., Zhang, T. (eds.) Workshop Proceedings of the 10th
  International Conference on Intelligent Environments, Shanghai, China, 30th
  June--1st of July 2014. Ambient Intelligence and Smart Environments, vol.~18,
  pp. 37--44. IOS Press (2014), web of Science=WOS:000360238400006

\bibitem{Kotulski-Sedziwy-2010}
Kotulski, L., Sedziwy, A.: Parallel graph transformations with double pushout
  grammars. In: Rutkowski, L., et~al. (eds.) Artifical Intelligence and Soft
  Computing, 10th International Conference, {ICAISC} 2010, Zakopane, Poland,
  June 13--17, 2010. Lecture Notes in Computer Science, vol. 6114, pp.
  280--288. Springer (2010)

\bibitem{Kotulski-Sedziwy-2011}
Kotulski, L., Sedziwy, A.: Parallel graph transformations supported by
  replicated complementary graphs. In: Dobnikar, A., et~al. (eds.) Adaptive and
  Natural Computing Algorithms - 10th International Conference, {ICANNGA} 2011,
  Ljubljana, Slovenia, April 14-16, 2011, Proceedings. Lecture Notes in
  Computer Science, vol. 6594, pp. 254--264 (2011)

\bibitem{Lara-Labrador-2013}
Lara, O.D., Labrador, M.A.: A survey on human activity recognition using
  wearable sensors. {IEEE} Communications Surveys and Tutorials  15(3),
  1192--1209 (2013)

\bibitem{Magherini-etal-2013}
Magherini, T., Fantechi, A., Nugent, C.D., Vicario, E.: Using temporal logic
  and model checking in automated recognition of human activities for
  ambient-assisted living. {IEEE} Transactions on Human-Machine Systems  43(6),
   509--521 (2013)

\bibitem{Schmidt-2013-provers}
Schmidt, R.: Website: accessible theorem provers,
  {\texttt{http://www.cs.man.ac.uk/\textasciitilde schmidt/tools/}} (2014),
  accesed on 24-June-2014

\bibitem{Wolter-Wooldridge-2011}
Wolter, F., Wooldridge, M.: Temporal and dynamic logic. Journal of Indian
  Council of Philosophical Research  XXVII(1),  249--276 (2011)

\end{thebibliography}

\end{document}